\begin{document}
\title{CFL: Cluster Federated Learning in Large-scale Peer-to-Peer Networks}
%
%\titlerunning{Abbreviated paper title}
% If the paper title is too long for the running head, you can set
% an abbreviated paper title here
%
\author{Qian~Chen\inst{1} \and
		Zilong~Wang\inst{1} \and
		Yilin~Zhou\inst{1} \and
		Jiawei~Chen\inst{1} \and
		Dan~Xiao\inst{1} \and
	    Xiaodong~Lin\inst{2}}
\authorrunning{Q.~Chen et al.}
% First names are abbreviated in the running head.
% If there are more than two authors, 'et al.' is used.
%
\institute{$^{1}$State Key Laboratory of Integrated Service Networks, School of Cyber Engineering, Xidian University, Xi'an 710071, China \\
\email{\{qchen\_4, jiaweichen98\}@stu.xidian.edu.cn, zlwang@xidian.edu.cn,\\ \{xduyilinzhou, xdudanxiao\}@gmail.com}\\
$^{2}$School of Computer Science, University of
Guelph, Guelph, Canada\\
%	
%\email{lncs@springer.com}\\
%
%\url{http://www.springer.com/gp/computer-science/lncs} \and
%ABC Institute, Rupert-Karls-University Heidelberg, Heidelberg, Germany\\
%
\email{xlin08@uoguelph.ca}}
\maketitle              % typeset the header of the contribution
\begin{abstract}
Federated learning (FL) has sparked extensive interest in exploiting the private data on clients' local devices. However, the parameter server setting of FL not only has high bandwidth requirements, but also poses data privacy issues and a single point of failure. In this paper, we propose an efficient and privacy-preserving protocol, dubbed CFL, which is the first fine-grained global model training for FL in large-scale peer-to-peer (P2P) networks. Unlike previous FL in P2P networks, CFL aggregates local model update parameters hierarchically, which improves the communication efficiency facing large amounts of clients. Also, the aggregation in CFL is performed in a secure manner by introducing the authenticated encryption scheme, whose key is established through a random pairwise key scheme enhanced by a proposed voting-based key revocation mechanism. Rigorous analyses show that CFL guarantees the privacy and data integrity and authenticity of local model update parameters under two widespread threat models. More importantly, the proposed key revocation mechanism can effectively resist hijack attacks, thereby ensuring the confidentiality of the communication keys. Ingenious experiments on the Trec06p and Trec07 datasets show that the global model trained by CFL has good classification accuracy, model generalization, and rapid convergence rate, and the dropout-robustness of the system is achieved. Compared to the first global model training protocol for FL in P2P networks, PPT, CFL improves communication efficiency by 43.25\%. Also, CFL outperforms PPT in terms of computational efficiency.

\keywords{Federated learning \and Peer-to-peer network \and Communication efficiency \and Privacy-preserving}
\end{abstract}
\section{Introduction}
Machine learning (ML) has injected vitality into many aspects of modern society, such as speech recognition \cite{speech2013}, image recognition \cite{image2012}, natural language processing \cite{Jordan2015}, etc. Traditional ML is featured by centrally collecting raw data to train the model. However, gathering all data to a central database consumes high communication bandwidth, while the bandwidth is a rare resource. Besides, the centrally collected data has a high risk to be abused, which has drawn more and more attention from society, e.g., the General Data Protection Regulation \cite{GDPR2017} where the usage of user data is strictly limited. Under such circumstances, ML seems to meet a severe data usage dilemma \cite{MLonBD17}.

To address communication efficiency and privacy concerns that centralized ML poses to data owners, Federated learning (FL) \cite{McMahan2017} has emerged as a state-of-the-art ML system to solve the dilemma, which allows clients to collaboratively reap the benefits of shared models without centrally storing data. In FL, clients train their personalized models locally, and then a central server aggregates local contributions, e.g., local models, local gradients, or local model update parameters, to update the global model. Such a distributed paradigm not only preserves the privacy of clients \cite{Bonawitz2017} but also improves the communication efficiency of the system \cite{Konecny2016}, as only the local contributions leave clients' local devices instead of the private and massive local data.

Although FL has been a great success, the parameter server setting relies too much on the central server which might not exist in the real world, such as Internet of Things (IoT) \cite{iot2015}, Smart Home \cite{home2017}, and Ad Hoc \cite{hoc2002}. Besides, the central server gives rise to several drawbacks: (1) the central server must be honest, which is difficult to guarantee in the real world \cite{He-Tan2020}; (2) the central server has high computational costs and high bandwidth requirements \cite{Lian2017}; (3) the central server can become a single point of failure \cite{ring2017,Hu2019}. As a result, how to deploy FL without the central server deserves deep research, which is referred to as the {\it decentralized FL} \cite{Hu2019} or {\it FL in peer-to-peer (P2P) networks} \cite{PPT2021}.

Consequently, some {\it decentralized FL} works involving the protocol \cite{PPT2021}, algorithms \cite{SGP2019,ring2017,Air2019}, and frameworks \cite{Dubey2020,Hu2019,Li-Wen2020,Ramanan2020} have been proposed successively. Also, decentralized vertical FL, which is a special decentralized FL paradigm, has been studied in \cite{He-Tan2020,Marfoq2020,Yang2019}. However, all existing decentralized FL works are coarse-grained, which cannot guide the global model training in practice, except \cite{PPT2021} (Under review by Computers \& Security after a major reversion). Recently, Chen {\it et al.} \cite{PPT2021} proposed a fine-grained PPT protocol, which was claimed as the first communication-efficient and privacy-preserving global model training protocol for {\it FL in P2P networks}. However, PPT has several shortcomings facing massive clients. {\bf First}, PPT aggregates local model update parameters directly by a so-called \textit{cyclic transmission} manner, which is inefficient for {\it FL in large-scale P2P networks} where massive clients are involved. {\bf Second}, the key distribution scheme in PPT is vulnerable to hijacking attacks \cite{hu2019session}, which leaves a potential privacy leakage risk to the system. {\bf Third}, the signature scheme in PPT requires extra storage space and heavy computational power. Therefore, an intuitive question is \textit{How to efficiently and securely achieves FL in large-scale P2P networks.} 

Our response to this question is a \underline{C}luster \underline{F}ederated \underline{L}earning (CFL) global model training protocol for {\it FL in large-scale P2P networks}. Specifically, we expand the system model of PPT \cite{PPT2021}, where clients and a server loosely connected to a few of them are distributed in a P2P network, by involving large amounts of clients. To improve the communication efficiency of the system, CFL aggregates local model update parameters hierarchically, rather than directly aggregates all local contributions like PPT. Instead of leveraging the digital signature scheme, CFL guarantees security and computation efficiency through the authenticated encryption scheme, whose key is established by a random pairwise keys scheme enhanced by a proposed key revocation mechanism which improves the security against hijacking attacks \cite{hu2019session}. 

The main contributions are as follows:

\noindent{\bfseries A global model training protocol for FL in large-scale P2P networks.}
We propose a fine-grained global model training protocol for FL in large-scale P2P networks which comprises a Secure Communication Key Establishment Protocol and an Inner-Cluster Model Aggregation Protocol. Compared to the PPT protocol, larger amounts of clients could train the global model in a higher communication and computation efficient manner with the deployment of the proposed CFL protocol. 

\noindent{\bfseries A secure and privacy-preserving protocol.}
We analyze the security and privacy ability CFL achieves under the widespread internal semi-honest participants and external malicious adversaries threat models \cite{Bonawitz2019,Bonawitz2017,PPT2021}. In detail, the proposed CFL protocol can protect the privacy of the client's individual local contributions by a random noise, which is generated initially and eliminated ultimately. Besides, CFL guarantees the data integrity and authenticity of local contributions by leveraging the authenticated encryption scheme. Particularly, rigorous analyses show that CFL can guarantee the confidentiality of the communication key against hijacking attacks \cite{hu2019session}. In addition, we analyze the network connectivity and formally give the threshold of the key ring size, which proves that the proposed CFL protocol can guarantee secure transmission.

\noindent{\bfseries Experimental evaluation.}
We conduct experiments on the Trec06p and Trec07 datasets, which demonstrate that the proposed CFL protocol can ensure classification accuracy, model generalization, and rapid convergence rate of the global model. Also, the dropout-robustness of the system is achieved by CFL. More importantly, CFL improves communication efficiency by 43.25\% and computational efficiency by 0.87\% compared to PPT \cite{PPT2021}.

The rest of the paper is organized as follows. Section 2 reviews related works. In Section 3, we formalize the system model, security requirements, and design goal. Section 4 formally describes some preliminaries. Section 5 devises the proposed CFL protocol followed by security analysis, network connectivity analysis, and experiments in Sections 6 and7. We conclude this paper in Section 8.

\section{Related Works}
Existing decentralized FL works mainly focus on algorithms and frameworks. As for the algorithm, Baidu \cite{ring2017} proposed a bandwidth optimization algorithm, Ring AllReduce, which averages the gradient effectively on distributed devices. Yang \textit{et al.} \cite{Air2019} exploited the signal superposition property of wireless multiple-access channels to locally computing updates in FL by the over-the-air computation. Combining the PushSum gossip algorithm with stochastic gradient updates, Assran \textit{et al.} \cite{SGP2019} proposed SGP and OSGP to accelerate distributed training of deep neural networks. As for the framework, Ramanan \textit{et al.} \cite{Ramanan2020} proposed a blockchain-based aggregator-free FL framework, BAFFLE, which achieves high scalability and computational efficiency in a private Ethereum network. Hu \textit{et al.} \cite{Hu2019} addressed the decentralized FL problem by aggregating the local model segments based on the gossip protocol. Dubey \textit{et al.} \cite{Dubey2020} devised FEDUCB for FL in P2P networks to solve the contextual linear bandit problem. Li \textit{et al.} \cite{Li-Wen2020} proposed SimFL to build decision trees with bounded errors, which leverages locality-sensitive hashing to collect similarity information. However, the above works propose advanced decentralized FL frameworks or algorithms in a coarse-grained
manner. A formal description of decentralized FL settings and a general and fine-grained model training protocol that can guide model training for {\it FL in P2P networks} in practice are still absent.

Recently, Chen \textit{et al.} \cite{PPT2021} formalized the FL settings in P2P networks and proposed a fine-grained model training protocol, PPT, which aggregates clients' local contributions in a privacy-preserving manner. However, the PPT protocol is inefficient facing a large number of clients, i.e., {\it FL in large-scale P2P networks}. Besides, the communication key establishment phase in PPT is vulnerable to hijacking attacks \cite{hu2019session}, which might give rise to a security risk to the system.

To meet the efficiency requirement of FL in large-scale P2P networks, the proposed CFL hierarchically aggregates local contributions through the {\it multi cyclic transmission}. Also, the proposed CFL resists hijacking attacks by a proposed voting-based key revocation mechanism when establishing communication keys. To the best of our knowledge, the proposed CFL protocol is the first fine-grained model training protocol for FL in large-scale P2P networks, which is efficient and privacy-preserving.

\section{System Model, Security Requirements and Design Goal}

In this section, we formalize the system model, propose the security requirements, and identify the design goal.

\begin{figure*}[htbp]
	\centering
	\includegraphics[scale=0.24]{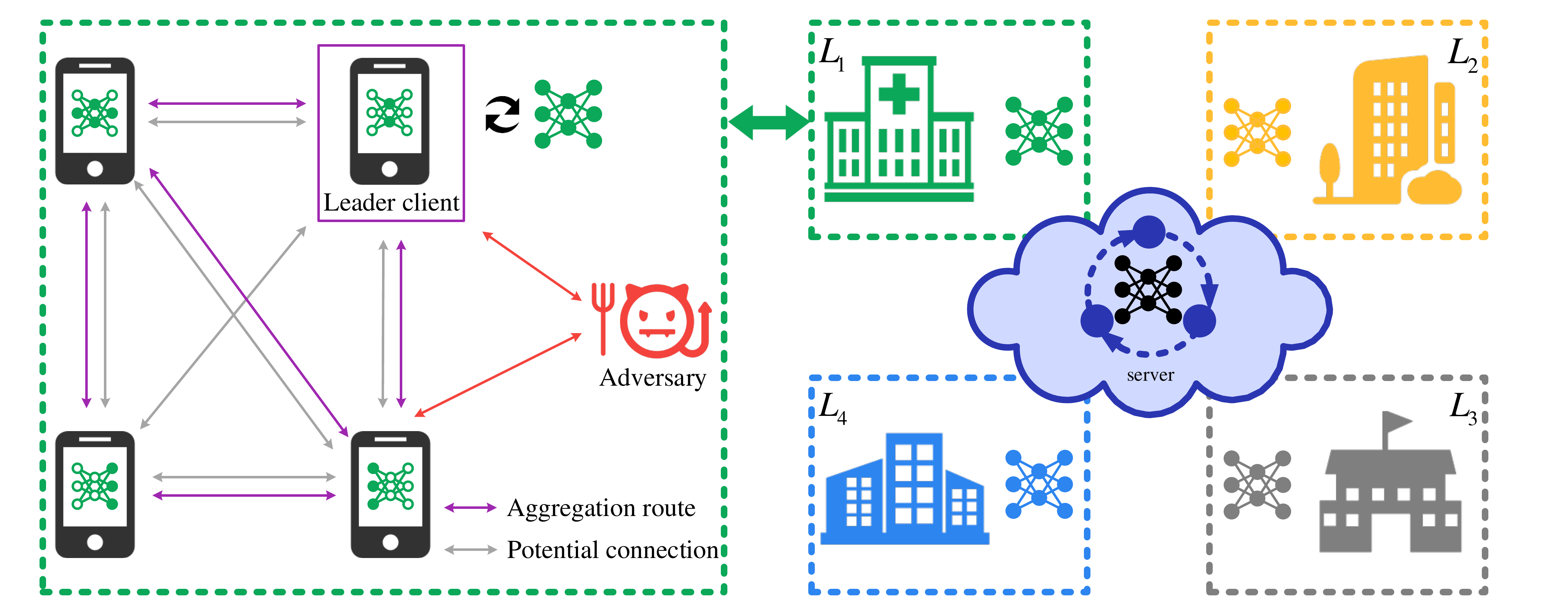}
	\caption{Federated learning in large-scale P2P networks} 
	\vspace{-1.5mm}
	\label{fig-1}  
\end{figure*}

\subsection{System Model}
In this paper, we consider the {\it FL in large-scale P2P networks}, which is an expansion of the system model of PPT \cite{PPT2021} in terms of client amount. Specifically, as shown in Fig. \ref{fig-1}, a large group of potential clients $ \mathbb{U}\!=\!\{u_{i}| i\!=\!1,2,\cdots,Z\} $ with constant wireless communication ranges are distributed in a large-scale P2P network. The clients truly participate in the global model training process are target clients, denoted by $ \mathbb{C}\!=\!\{c_{i}| i\!=\!1,2,\cdots,N\} $, $\mathbb{C}\! \subseteq \!\mathbb{U}$. Each target client $c_{i}$ has a local dataset $ \mathcal{D}_{i} $ of size $ \vert\mathcal{D}_{i}\vert $ containing private training data. Besides, a server loosely connected with a few potential clients is responsible for coordinating FL tasks, which means the server cannot aggregate clients' local contributions directly. The goal of target clients is to collaboratively train a global ML model $ W $ under the orchestration of the server, where $ W $ is usually expressed as a $d$-dimensional vector in the ML community.  

In detail, the collaborative training process is an iteration. For clarity, we take the $ t $-th round as an example to describe the iterative training process. The server first sends the global model $ W^{t} $ to the potential clients connected to it directly, and these potential clients pass $ W^{t} $ to other potential clients subsequently. After receiving the global model, each target client $ c_{i} $ initializes the local model as the global model $ W^{t} $, and trains it on $ \mathcal{D}_{i} $ to obtain an updated local model $ w_{i}^{t} $ through the stochastic gradient descent (SGD) algorithm, which is simply expressed as:
\begin{equation}\label{eq-1}
	w_{i}^{t}\leftarrow \textbf{\textit{Train}}(W^{t}, \mathcal D_{i}),
\end{equation}
where the $ w_{i}^{t} $ is a $d$-dimensional vector. Then, each target client $ c_{i} $ calculates its local model update parameters $ x_{i}^{t} $, shown as:
\begin{equation}\label{eq-2}
	x^{t}_{i}=w_{i}^{t}-W^{t}.
\end{equation}
Next, all target clients aggregate local model update parameters and upload the aggregation result to the server to update the global model, shown as:
\begin{equation}\label{eq-3}
	W^{t+1}=W^{t}+\sum_{i=1}^{N}p_{i}x^{t}_{i},
\end{equation}
where $ p_{i}=\frac{\vert\mathcal{D}_{i}\vert}{\sum\nolimits_{i=1}^{N}\vert\mathcal{D}_{i}\vert} $ is the aggregation weight of $c_{i}$, and $W^{t+1}$ is the updated global model. All participants under our consideration perform the above training process periodically until the global model converges to an optimal result, i.e., the final converged global model $ W^{*} $.

{\it Communication model.} Inherited from the communication characteristics of P2P networks, all data in the system is transmitted through public communication channels, which means the transmitted data could be eavesdropped by any client. In addition, all target clients can only communicate with their single-hop neighbor clients within their constant wireless communication ranges, and only a few clients could communicate with the server directly.

\subsection{Security Requirements}
In the context of {\it FL in large-scale P2P networks}, security and privacy are the most important concerns. Thus, we consider two widely used threat models in FL \cite{Bonawitz2019,Bonawitz2017,PPT2021}, i.e., the internal semi-honest participants threat model and the external malicious adversaries threat model. We further consider hijacking attacks \cite{hu2019session} from external malicious adversaries when establishing communication keys, which is thoughtless in PPT \cite{PPT2021}.

In the internal semi-honest (also named as honest-but-curious) participants threat model, the server and target clients perform prescribed operations honestly but are curious about others' local model update parameters, which means clients could infer private information from others' local model update parameters by executing model inversion attacks \cite{Fredrikson2015}. In particular, honest-but-curious potential clients may execute eavesdropping attacks \cite{attack2011} to obtain the private local model update parameters from target clients, thereby executing model inversion attacks \cite{Fredrikson2015} to threaten the privacy of local data.

As shown in Fig. \ref{fig-1}, in the external malicious adversaries threat model, a malicious adversary $ \mathcal{A} $ could execute tampering attacks \cite{attack2011} and impersonation attacks \cite{attack2011} to threaten the data integrity and the authenticity of aggregated local contributions. More seriously, $ \mathcal{A} $ could execute hijacking attacks, where $ \mathcal{A} $ hijacks honest participants to obtain the communication keys, and hence tampers with the aggregated model update parameters when aggregating local contributions, resulting in serious security risks. As a result, the proposed protocol should meet the following security requirements:
\begin{itemize}
\item{
	\textit{Privacy preservation.} Neither an honest-but-curious server nor clients can obtain others' precise local model update parameters. Also, the aggregated model update parameters should be only obtained by the sender and receiver but others during the transmission process.} 

\item{
	\textit{Data integrity and authenticity.} Ensure that local model update parameters are indeed sent by a legitimate client and have not been tampered with during transmission. That is, active attacks executed by $ \mathcal{A} $ should be resisted.}

\item{
	\textit{Confidentiality of the communication keys.} The communication key used to protect the aggregated model update parameters during transmission can only be obtained by the two corresponding target clients. In other words, hijacking attacks executed by $ \mathcal{A} $ should be detected and resisted.}
\end{itemize}

\subsection{Design Goal}
Under the system model and security requirements, our design goal is to propose an efficient and privacy-preserving global model training protocol for {\it FL in large-scale P2P networks}. Specifically, the following objectives should be achieved in the proposed protocol:

\begin{itemize}
	\item{{\it The security requirements should be guaranteed in the proposed protocol.} As mentioned in the security requirements, the proposed protocol should achieve privacy preservation, data integrity and authenticity, and the communication key confidentiality simultaneously under the internal semi-honest participants threat model and the external malicious adversaries threat model.}
	\item{{\it Communication and computational efficiency should be guaranteed in the proposed protocol.} Although more clients are involved in FL in large-scale P2P networks, the proposed protocol should ensure communication efficiency. In addition, the proposed protocol should reduce the computational cost of a single client to ensure the computational efficiency.}
\end{itemize}

\section{Preliminaries}

\subsection{Random Pairwise Keys Scheme}
The random pairwise keys scheme \cite{Chan2003} is initially deployed in distributed networks to establish communication keys. The basic idea is to associate two potential clients with a pairwise key based on their identities. The detailed steps of the random pairwise keys scheme are described as follows:

\begin{itemize}
	\item{{\it Step-1.} All potential clients generate their unique identities, denoted by $\textbf{\textit{ID}}\!=\!\{\mathit{ID}_{i}|i\!=\!1,2,\cdots,Z\}$.}
	\item{{\it Step-2.} Each potential client randomly matches its identity with $M$ other identities, and each client pair obtains a pair-wise $k$ from a key pool $\mathcal{K}$.}
	\item{{\it Step-3.} The pair-wise $k$ and the identity of the matching client are stored in both clients' key rings, denoted by $\textbf{\textit{R}}_{i}\!=\!\{(\mathit{ID}_{i\alpha}, k_{i\alpha})|\alpha\!=\!1,2,\cdots,M\}$.}
	\item{{\it Step-4.} Through the challenge-response mechanism, each potential client uses its key ring to establish communication keys with other potential clients that have common keys.}
\end{itemize}

\subsection{Authenticated Encryption}
Authenticated encryption \cite{ae2000} is an encryption scheme that provides both security and authentication, which comprises three algorithms: the key generation algorithm $\textbf{\textit{AE.Gen$\left(\cdot\right)$}}$, the encryption algorithm $\textbf{\textit{AE.Enc$\left(\cdot\right)$}}$, and the decryption algorithm $\textbf{\textit{AE.Dec$\left(\cdot\right)$}}$. The detailed steps of the authenticated encryption are as follows:

\noindent \textcircled{\oldstylenums{1}} $\textbf{\textit{AE.Gen$\left(\cdot\right)$}}$ generates a symmetric key $ K $, which is simply expressed as:
\begin{equation*}
	K\leftarrow\textbf{\textit{AE.Gen}}(1^{\kappa}),
\end{equation*}
where $ \kappa $ is a security parameter, i.e. the length of the symmetric key which is consistent with the length of each dimension of the local model.

\noindent \textcircled{\oldstylenums{2}} $\textbf{\textit{AE.Enc$\left(\cdot\right)$}}$ inputs the plaintext $ X $ and the key $ K $ and outputs the ciphertext $ Y $ and an authentication tag $ \sigma $ in the form of a message authentication code (MAC), which is simply expressed as:
\begin{equation*}
	(Y || \sigma)\leftarrow \textbf{\textit{AE.Enc}}(X, K) ,
\end{equation*}
where $||$ represents the string concatenation operator.

\noindent \textcircled{\oldstylenums{3}} $\textbf{\textit{AE.Dec$\left(\cdot\right)$}}$ inputs the ciphertext $ Y $ and the corresponding key $ K $ and outputs the plaintext $ X $, which is simply expressed as:
\begin{equation*}
	X\leftarrow \textbf{\textit{AE.Dec}}( (Y || \sigma), K) .
\end{equation*}  
If the $ \sigma $ fails authentication, $\textbf{\textit{AE.Dec$\left(\cdot\right)$}}$ outputs an unauthenticated symbol.

\section{System Design}
In this section, we propose an efficient and privacy-preserving \underline{C}luster \underline{F}ederated \underline{L}earning (CFL) global model training protocol for {\it FL in large-scale P2P networks}. The proposed CFL protocol mainly consists of the following five parts: cluster division, communication key establishment, local model training, aggregation within a single cluster, and aggregation and update across clusters. 

\subsection{Cluster Division}
There are large amounts of clients distributed in a large-scale P2P network under our consideration, where the complex connections among clients may bring huge communication latency if aggregating all local contributions directly. Therefore, we first divide all potential clients into several clusters $\mathbb{L}\!=\!\{L_{h}| h\!=\!1,2,\cdots\,\lambda\}$ based on their constant wireless communication ranges, shown in Fig. \ref{fig-1}. Note that, since only a few potential clients are directly connected to the server in each cluster, the potential clients and their connections within the cluster can also be regarded as a P2P network. Thus, the local contribution can be aggregated hierarchically, i.e., aggregation within a single cluster and aggregation across clusters subsequently. We stress that such a hierarchical aggregation process is named as {\it multi cyclic transmission}, since the aggregation processes within several clusters are performed simultaneously, which is more communication-efficient than {\it cyclic transmission} of the PPT protocol \cite{PPT2021}, where all participants aggregate local contributions in a cyclic manner directly.

For gravity, we regularize the symbols in a specific cluster $L_{h}\!\in\!\mathbb{L}$. The potential clients in $L_{h}$ are denoted by $ \textbf{\textit{U}}_{h}\!=\!\{u_{h,i}|i\!=\!1,2,\cdots,z_{h}\} $, and all $ \textbf{\textit{U}}_{h} \left(h=1,2,\cdots,\lambda\right) $ make up $ \mathbb{U} $, i.e., $\mathbb{U}=\bigcup\nolimits_{h=1}^{\lambda} \textbf{\textit{U}}_{h}$. The clients that actually participate in a specific round of global model training are target clients, denoted by $ \textbf{\textit{C}}_{h}\!=\!\{c_{h,i}|i\!=\!1,2,\cdots,n_{h}\} $, $\textbf{\textit{C}}_{h}\! \subseteq \!\textbf{\textit{U}}_{h}$, and all $\textbf{\textit{C}}_{h} \left(h=1,2,\cdots,\lambda\right)$ make up $ \mathbb{C} $, i.e., $\mathbb{C}=\bigcup\nolimits_{h=1}^{\lambda} \textbf{\textit{C}}_{h}$.

\begin{figure}[h]
	\centering
	\begin{tikzpicture}[scale=1]
		%%----------------------------------------%%????	
		\path[fill=yellow!0, draw=black!50](0,0) rectangle (12,11.2);
		
		\node[right] at (0.5,10.9){\textbf{Secure Communication Key Establishment Protocol}};
		
		\draw[thick](0.6,10.7)-- (11.4,10.7);
		
		\node[right] at (0.5,10.4){\textbf{\small{$\bullet$  Communication key establishment:}}};
		
		\node[right] at (0.8,10){\small{- Each potential client $ u_{h,i} $ generates its identity $\mathit{ID}_{h,i}$. }};
		
		\node[right] at (0.8,9.6){\small{- Each identity is matched with $m_{h}$ other randomly selected identities.}};
		
		\node[right] at (0.8,9.2){\small{- Each client pair associated with two identities obtains a pairwise key }};
		
		\node[right] at (1.1,8.8){\small{from a key pool $\mathcal{K}$.}};
		
		\node[right] at (0.8,8.4){\small{- All identities and the corresponding keys make up $ u_{h,i} $'s key ring}};
		
		\node[right] at (1.1,8){\small{$\textit{\textbf{R}}_{h,i}\!=\!\{(\mathit{ID}_{h,i\alpha}, k_{h,i\alpha})|\alpha\!=\!1, 2, \cdots, m_{h}\}$.}};
		
		\node[right] at (0.8,7.6){\small{- $ u_{h,i} $  broadcasts a plaintext $a_{h,i}$ and $ m_{h} $ encrypted messages $ A_{h,i\alpha} $ }};
		
		\node[right] at (1.1,7.2){\small{using $ k_{h,i\alpha} $ in $\textbf{\textit{R}}_{h,i}$, shown as $ A_{h,i\alpha} \leftarrow  \textbf{\textit{CK.Enc}}(k_{h,i\alpha},a_{h,i}) $.}};
		
		\node[right] at (0.8,6.8){\small{- $ u_{h,j} $ ($j\!\neq\!i$) tries to decrypt $A_{h,i\alpha}$ using $ k_{h,j\alpha} $ in $\textbf{\textit{R}}_{h,j}$, which is shown}};
		
		\node[right] at (1.1,6.4){\small{as $ a_{h,i}^{\prime} \leftarrow \textbf{\textit{CK.Dec}}(k_{h,j\alpha},A_{h,i\alpha}) $.}};
		
		\node[right] at (0.8,6){\small{- $u_{h,j}$ compares $a_{h,i}^{\prime}$ with the plaintext $a_{h,i}$, seeking for shared keys.}};
		
		\node[right] at (0.8,5.6){\small{- $u_{h,i}$ and $u_{h,j}$ establish the communication key $\textit{K}_{hi,hj}$ by XOR.}};
		
		\node[right] at (0.5,5.2){\textbf{\small{$\bullet$  Key revocation:}}};
		
		\node[right] at (0.8,4.8){\small{- $u_{h,i}$ is assigned $m_{h}$ voting keys $\{v_{h,i\alpha}|\alpha\!=\!1,2,\!\cdots\!,\!m_{h}\}$ and the hash}};
		
		\node[right] at (1.1,4.4){\small{values of the voting keys of $m_{h}\!\!-\!1$ other voting members $\mathcal{H}(v_{h,i\beta})$,}};
		
		\node[right] at (1.1,4){\small{where $\mathcal{H}(\cdot)$ is the hash function, $\beta\!\neq\!\alpha$, and $1\!\leqslant\! \beta\!\leqslant\! m_{h}$.}};
		
		\node[right] at (0.8,3.6){\small{- When $u_{h,\mathcal{A}^{\prime}\alpha}$ votes against $ \mathcal{A}^{\prime} $, it broadcasts $v_{h,\mathcal{A}^{\prime}\alpha}$ in forms of plaintext.}};
		
		\node[right] at (0.8,3.2){\small{- $m_{h}\!-\!1$ other voting members verify the vote $v_{h,\mathcal{A}^{\prime}\alpha}$, shown as }};
		
		\node[right] at (1.1,2.8){\small{$\{0,1\} \leftarrow \operatorname{\textbf{\textit{Verify}}}\left(\mathcal{H}\left(v_{h,\mathcal{A}^{\prime}\alpha}\right), \mathcal{H}^{\prime}\left(v_{h,\mathcal{A}^{\prime}\alpha}\right)\right)$.}};
		
		\node[right] at (0.8,2.4){\small{- Once validated, the vote is marked, shown as $\mathcal{H}\left(v_{h,\mathcal{A}^{\prime}\alpha}\right) \leftarrow (v_{h,\mathcal{A}^{\prime}\alpha},\nu)$,}};
		
		\node[right] at (1.1,2){\small{where $\nu$ denotes the vote is marked.}};
		
		\node[right] at (0.8,1.6){\small{- If the number of marked votes is not less than $l_{h}$, $ \mathcal{A}^{\prime} $ will be revoked.}};
		
		\node[right] at (0.5,1.2){\textbf{\small{$\bullet$  Re-keying:}}};
		
		\node[right] at (0.8,0.8){\small{- The affected clients restart the \textbf{Communication key establishment}}};
		
		\node[right] at (1.1,0.4){\small{part to build new communication keys.}};

	\end{tikzpicture}
    \vspace{-1.5mm}
	\caption {Secure Communication Key Establishment Protocol.}\label{fig-2}
\end{figure}

\subsection{Communication Key Establishment}

To enhance the security while transmitting local model update parameters, we propose a \textbf{Secure Communication Key Establishment Protocol} in Fig. \ref{fig-2} to establish communication keys for potential clients within a single cluster, which is an important part of the proposed CFL protocol and comprises three parts, i.e., \emph{Communication key establishment}, \emph{Key revocation}, and \emph{Re-keying}. 

$\bullet$ \emph{Communication key establishment.}
Without loss of generality, we take the cluster $L_{h}$ as an example. Each potential client $ u_{h,i} $  first generates a unique identity $\mathit{ID}_{h,i}$. Then, $ u_{h,i} $ randomly matches its identity with $ m_{h} $ other identities, and each client pair obtains a pairwise key from a key pool $\mathcal{K}$ containing sufficient keys. Thus, $m_{h}$ sets of identities and the corresponding pairwise keys make up the key ring $\textit{\textbf{R}}_{h,i}\!=\!\{(\mathit{ID}_{h,i\alpha}, k_{h,i\alpha})|\alpha\!=\!1, 2, \cdots\!, m_{h}\}$ which is stored locally on $ u_{h,i} $. Note that the above offline operations do not consume communication resources.

Afterwards, all potential clients discover other potential clients that have shared keys with them through the challenge-response mechanism. Specifically, $u_{h,i}$ broadcasts a plaintext $a_{h,i}$ and $m_{h}$ messages $\{A_{h,i\alpha}|\alpha\!=\!1,2,\cdots,m_{h}\}$ encrypted by $m_{h}$ pairwise keys $\{k_{h,i\alpha}|\alpha\!=\!1,2,\cdots,m_{h}\}$ in $\textbf{\textit{R}}_{h,i}$, shown as:
\begin{equation}\label{eq-4}
	A_{h,i\alpha} \leftarrow \textbf{\textit{CK.Enc}} (k_{h,i\alpha},a_{h,i}),
\end{equation}
where $ \textbf{\textit{CK.Enc}}(\cdot) $ is the symmetric encryption algorithm. After that, every other potential clients $u_{h,j}$ $(j\!\neq\!i)$ attempts to decrypt the ciphertexts successively using the locally stored $m_{h}$ pairwise keys in its key ring, shown as:
\begin{equation}\label{eq-5}
	a_{h,i}^{\prime} \leftarrow \textbf{\textit{CK.Dec}}(k_{h,j\alpha},A_{h,i\alpha}),
\end{equation}
where $a_{h,i}^{\prime}$ is the decryption result.
Consequently, $u_{h,j}$ compares $a_{h,i}^{\prime}$ with the plaintext $a_{h,i}$, seeking for shared keys. Thus, $u_{h,i}$ and $u_{h,j}$ execute XOR operations for all shared keys to obtain a communication key $K_{hi,hj}$.

$\bullet$ \emph{Key revocation.}
In the external malicious threat model under our consideration, an adversary $ \mathcal{A} $ can hijack honest potential clients to obtain the communication keys, thereby threatening the security of the system. Thus, we design a compact and efficient \emph{key revocation} mechanism based on public voting. 

In detail, once a potential client observes active attacks from the client $ \mathcal{A}^{\prime} $ hijacked by $ \mathcal{A} $, it broadcasts a negative vote. We define the voting members as $\{u_{h,\mathcal{A}^{\prime}\alpha}|\alpha\!=\!1,2,\cdots,m_{h}\}$ who are single-hop neighbors of $ \mathcal{A}^{\prime} $. Once $ \mathcal{A}^{\prime} $ is voted against more than $ l_{h} $, all voting members would cut off all connections to $ \mathcal{A}^{\prime} $ for revocation, thereby eliminating the $\mathcal{A}$'s influence on the system. 

In particular, the \textit{neighborhood broadcast} \cite{broadcast2005} mechanism is used for every public vote in the network. The mechanism adopts broadcasts between neighbors to hierarchically make a consensus in the network. We stipulate that all voting members need to broadcast any public votes received to maximize the probability of successful transmission to adjacent voting members. All broadcasts are in the form of plaintext, as the voters' identities have no need to be protected. The detailed voting process is demonstrated in Fig. \ref{fig-2}.

$\bullet$ \emph{Re-keying.} To ensure the long-term security of the system, the communication keys used for a period of time need to be updated. When the keys expire, the \emph{Re-keying} part will start. That is, all affected clients restart the \textit{Communication key establishment} part to rebuild new communication keys.

\subsection{Local Model Training}
After establishing communication keys, all potential clients collaboratively train the global model. Taking the $ t $-th round as an example, the server first distributes a global model $ W^{t} $ to the potential clients directly connected to it. These potential clients subsequently pass $ W^{t} $ to other potential clients within the same cluster. After receiving $ W^{t} $, each target client $ c_{h,i} $ initializes the local model as $ W^{t} $, and trains it on its local dataset $ \mathcal D_{h,i}$, shown as:
\begin{equation}\label{eq-6}
	w_{h,i}^{t}\leftarrow \textbf{\textit{Train}}(W^{t},\mathcal D_{h,i}).
\end{equation}
Then, $c_{h,i}$ calculates the local model update parameters, shown as: 
\begin{equation} \label{eq-7}
	x_{h,i}^{t}=w_{h,i}^{t}-W^{t}.
\end{equation}
Afterwards, $ c_{h,i} $ calculates its weighted local model update parameters $ X_{h,i}^{t}=p_{h,i}x_{h,i}^{t} $, where $ p_{h,i}=\frac{\vert\mathcal{D}_{h,i}\vert}{\sum\nolimits_{i=1}^{n_{h}}\vert\mathcal{D}_{h,i}\vert} $.

\subsection{Aggregation within a Single Cluster }

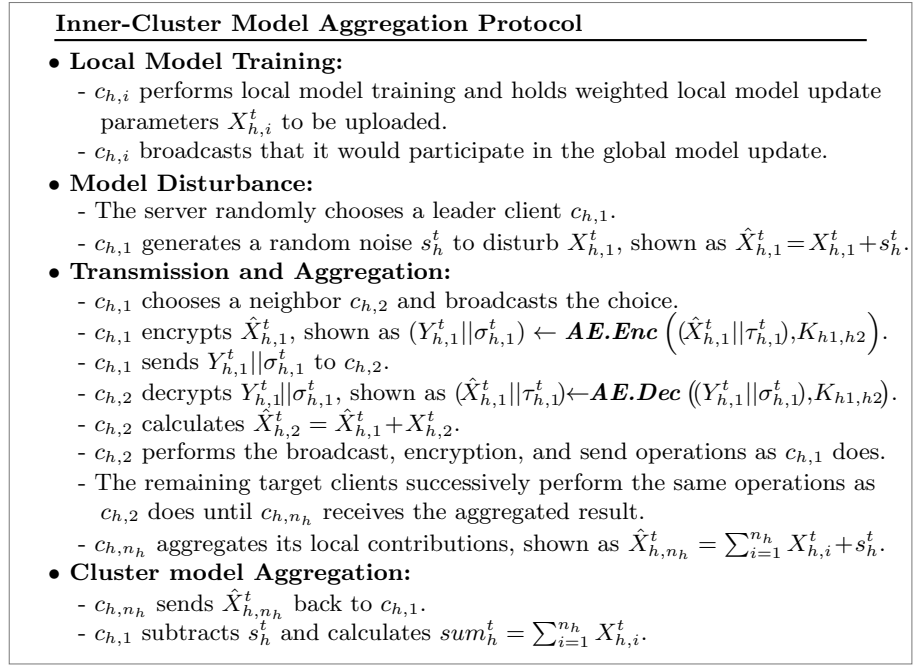
\begin{figure}[htbp]
	\centering
	\begin{tikzpicture}[scale=1.0]
		%%----------------------------------------%%????	
		\path[fill=yellow!0, draw=black!50](0,0) rectangle (12,8.8);
		
		\node[right] at (0.5,8.5){\textbf{Inner-Cluster Model Aggregation Protocol }};
		
		\draw[thick](0.6,8.3)-- (11.4,8.3);
		
		\node[right] at (0.4,8){\small{\textbf{$\bullet$ Local Model Training:}}};
		
		\node[right] at (0.8,7.6){\small{- $c_{h,i}$ performs local model training and holds weighted local model update}};
		
		\node[right] at (1.1,7.2){\small{parameters $X_{h,i}^{t}$ to be uploaded.}};
		
		\node[right] at (0.8,6.8){\small{- $c_{h,i}$ broadcasts that it would participate in the global model update.}};		
		
		\node[right] at (0.4,6.4){\small{\textbf{$\bullet$ Model Disturbance:}}};
		
		\node[right] at (0.8,6){\small{- The server randomly chooses a leader client $c_{h,1}$.}};
		
		\node[right] at (0.8,5.6){\small{- $c_{h,1}$ generates a random noise $s_{h}^{t}$ to disturb $ X_{h,1}^{t} $, shown as $\hat{X}_{h,1}^{t}\!=\! X_{h,1}^{t}\!+\!s_{h}^{t}$.}};		
		
		\node[right] at (0.4,5.2){\small{\textbf{$\bullet$ Transmission and Aggregation:}}};
		
		\node[right] at (0.8,4.8){\small{- $c_{h,1}$ chooses a neighbor $c_{h,2}$ and broadcasts the choice.}};		
		
		\node[right] at (0.8,4.4){\small{- $c_{h,1}$ encrypts $\hat{X}_{h,1}^{t}$, shown as $ (Y_{h,1}^{t}||\sigma_{h,1}^{t}) \leftarrow \textbf{\textit{AE.Enc}}\left((\!\hat{X}_{h,1}^{t}||\tau_{h,1}^{t}\!),\! K_{h1,h2}\right) $.}};
		
		\node[right] at (0.8,4){\small{- $c_{h,1}$ sends $ Y_{h,1}^{t}||\sigma_{h,1}^{t} $ to $ c_{h,2} $.}};
		
		\node[right] at (0.8,3.6){\small{- $c_{h,2}$ decrypts $ Y_{h,1}^{t}\!||\sigma_{h,1}^{t} $, shown as $(\!\hat{X}_{h,1}^{t}||\tau_{h,1}^{t}\!)\!\! \leftarrow\!\!  \textbf{\textit{AE.Dec}}\left(\!(Y_{h,1}^{t}||\sigma_{h,1}^{t}\!),\!K_{h1,h2}\!\right) $.}};
		
		\node[right] at (0.8,3.2){\small{- $c_{h,2}$ calculates $\hat{X}_{h,2}^{t}= \hat{X}_{h,1}^{t}\!+\!X_{h,2}^{t}$.}};
		
		\node[right] at (0.8,2.8){\small{- $c_{h,2}$ performs the broadcast, encryption, and send operations as $c_{h,1}$ does. }};
		
		\node[right] at (0.8,2.4){\small{- The remaining target clients successively perform the same operations as }};
		
		\node[right] at (1.1,2){\small{$c_{h,2}$ does until $c_{h,n_{h}}$ receives the aggregated result.}};
		
		\node[right] at (0.8,1.6){\small{- $c_{h,n_{h}}$ aggregates its local contributions, shown as $\hat{X}_{h,n_{h}}^{t}= \sum\nolimits_{i=1}^{n_{h}}X_{h,i}^{t}\!+\!s_{h}^{t}$.  }};		
		
		\node[right] at (0.4,1.2){\small{\textbf{$\bullet$ Cluster model Aggregation:}}};	
		
		\node[right] at (0.8,0.8){\small{- $c_{h,n_{h}}$ sends $\hat{X}_{h,n_{h}}^{t}$ back to $c_{h,1}$.   }};
		
		\node[right] at (0.8,0.4){\small{- $c_{h,1}$ subtracts $s_{h}^{t}$ and calculates $sum_{h}^{t} = \sum\nolimits_{i=1}^{n_{h}}X_{h,i}^{t}$.}};
		
	\end{tikzpicture}
    \vspace{-1.5mm}
	\caption {Inner-Cluster Model Aggregation Protocol.}
	\label{fig-3}
\end{figure}

After local training, all target clients aggregate their local model update parameters within their clusters respectively. The server randomly selects a target client directly connected to it as the leader client, namely $ c_{h,1} $. Then, $ c_{h,1} $ generates a random noise $s_{h}^{t}$ to disturb its weighted local model update parameters $X_{h,1}^{t}$, shown as:
\begin{equation}\label{eq-8}
	\hat{X}_{h,1}^{t}= X_{h,1}^{t}+s_{h}^{t},
\end{equation}
where $s_{h}^{t}$ is a $ d $-dimensional vector. Subsequently, $ c_{h,1} $ attaches a timestamp $\tau_{h,1}^{t}$ to the disturbed model update parameters, which is represented as $ \hat{X}_{h,1}^{t}||\tau_{h,1}^{t} $. Next, $c_{h,1}$ randomly selects a neighbor client $ c_{h,2} $ and encrypts $ \hat{X}_{h,1}^{t}||\tau_{h,1}^{t} $ using their communication key $ K_{h1,h2} $, shown as:
\begin{equation}\label{eq-9}
	(Y_{h,1}^{t}||\sigma_{h,1}^{t}) \leftarrow \textbf{\textit{AE.Enc}}\left((\hat{X}_{h,1}^{t}||\tau_{h,1}^{t}), K_{h1,h2}\right).
\end{equation}
Then, $ c_{h,1} $ sends $ Y_{h,1}^{t}||\sigma_{h,1}^{t} $ to $ c_{h,2} $, and $ c_{h,2} $ recovers it subsequently, shown as:
\begin{equation}\label{eq-10}
	(\hat{X}_{h,1}^{t}||\tau_{h,1}^{t}) \leftarrow  \textbf{\textit{AE.Dec}}\left((Y_{h,1}^{t}||\sigma_{h,1}^{t}),K_{h1,h2}\right) .
\end{equation}
After validating the timestamp $ \tau_{h,1}^{t} $, $c_{h,2}$ aggregates its own weighted model update parameters, shown as:
\begin{equation}\label{eq-11}
	\hat{X}_{h,2}^{t}= \hat{X}_{h,1}^{t}+X_{h,2}^{t}. 
\end{equation}

Subsequently, $ c_{h,2} $ selects a new neighbor client $ c_{h,3} $, and performs the encryption and transmission operations as $ c_{h,1} $ does. The remaining target clients perform the same decryption, aggregation, encryption, and transmission operations successively until all the target clients are traversed. Note that the aggregation route follows the \textit{depth-first traversal algorithm in the graph} \cite{Depth72}, where each target client prefers to select the target clients that have not yet participated in the aggregation. Finally, the final client $ c_{h,n_{h}} $ sends the aggregated model update parameters back to $c_{h,1}$. $c_{h,1}$ subtracts the noise $s_{h}^{t}$ to obtain the aggregation result within its cluster, denoted by $ sum_{h}^{t} $ and defined as:
\begin{equation}\label{eq-12}
	sum_{h}^{t} = \sum\limits_{i=1}^{n_{h}}X_{h,i}^{t}. 
\end{equation}

To make a consensus on the aggregation route, we also adopt the \textit{neighborhood broadcast} \cite{broadcast2005} mechanism during transmission, where each target client hierarchically broadcasts its behaviors to all clients. With the deployment of the \textit{neighborhood broadcast} mechanism, dropout-robustness is guaranteed.

For gravity, we illustrate the model aggregation process within a single cluster as the \textbf{Inner-Cluster Model Aggregation Protocol} in Fig. \ref{fig-3}.

\subsection{ Aggregation and Update Across Clusters }
After finishing the model aggregation within each single cluster, each leader client $ c_{h,1} $ could obtain the inner-cluster aggregation result $ sum_{h}^{t} $. Next, each leader client uploads $ sum_{h}^{t} $ directly to the server, and the server aggregates all $ sum_{h}^{t},h=1,\cdots,\lambda $, shown as:
\begin{equation}\label{eq-13}
	\mathit{SUM}^{t} = \sum\limits_{h=1}^{\lambda} q_{h}sum_h^{t}, 
\end{equation}
where $ q_{h}=\frac{\sum\nolimits_{D_{i}\in L_{h}}\vert\mathcal{D}_{i}\vert}{\sum\nolimits_{i=1}^{N}\vert\mathcal{D}_{i}\vert} $ is the aggregation weight of the cluster $L_{h}$. Consequently, the server updates the global model, shown as:
\begin{equation}\label{eq-14}
	W^{t+1}=W^{t}+\mathit{SUM^{t}}.
\end{equation}

Finally, the server distributes the updated global model to all potential clients, and the target clients iteratively train and aggregate until the global model converges to $W^{*}$. 

For clarity of description, we illustrate the integrated CFL protocol from a high-level view in Fig. \ref{fig-4}.
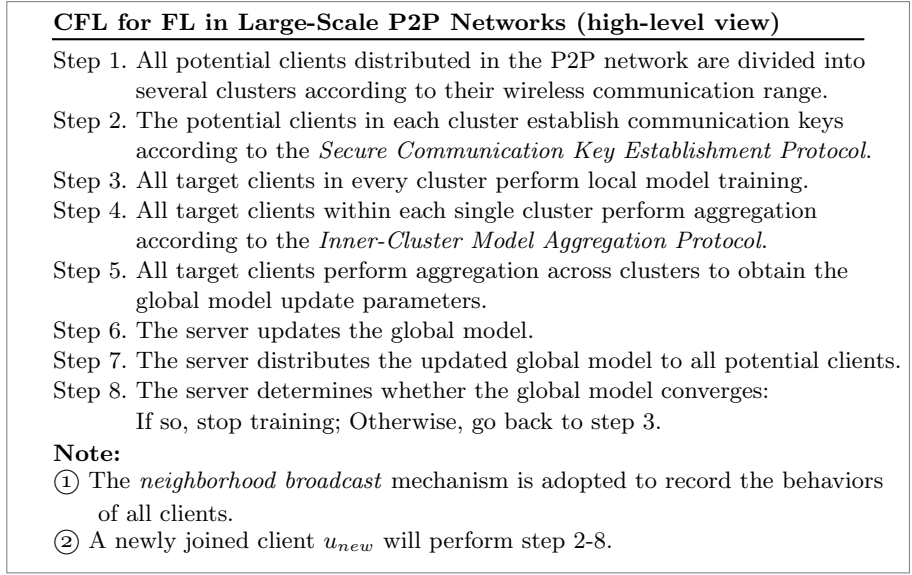
\begin{figure}
	\centering
	\begin{tikzpicture}[scale=1]
		%%----------------------------------------%%????	
		\path[fill=yellow!0, draw=black!50](0,0) rectangle (12,7.6);
		
		\node[right] at (0.5,7.3){\textbf{CFL for FL in Large-Scale P2P Networks (high-level view)}};
		
		\draw[thick](0.6,7.1)-- (11.4,7.1);
		
		\node[right] at (0.5,6.8){\small{Step 1. All potential clients distributed in the P2P network are divided into}};
		
		\node[right] at (1.6,6.4){\small{several clusters according to their wireless communication range.  }};		
		
		\node[right] at (0.5,6){\small{Step 2. The potential clients in each cluster establish communication keys}};
		
		\node[right] at (1.6,5.6){\small{according to the \textit{Secure Communication Key Establishment Protocol}. }};
		
		\node[right] at (0.5,5.2){\small{Step 3. All target clients in every cluster perform local model training.}};
		
		\node[right] at (0.5,4.8){\small{Step 4. All target clients within each single cluster perform aggregation}};
		
		\node[right] at (1.6,4.4){\small{according to the \textit{Inner-Cluster Model Aggregation Protocol}.}};		
		
		\node[right] at (0.5,4){\small{Step 5. All target clients perform aggregation across clusters to obtain the}};
		
		\node[right] at (1.6,3.6){\small{global model update parameters.}};
		
		\node[right] at (0.5,3.2){\small{Step 6. The server updates the global model.}};
		
		\node[right] at (0.5,2.8){\small{Step 7. The server distributes the updated global model to all potential clients.}};
		
		\node[right] at (0.5,2.4){\small{Step 8. The server determines whether the global model converges:}};
		
		\node[right] at (1.6,2){\small{If so, stop training;  Otherwise, go back to step 3.}};
		
		\node[right] at (0.5,1.6){\small{\textbf{Note:}}};
		
		\node[right] at (0.5,1.2){\small{\noindent \textcircled{\oldstylenums{1}} The \emph{neighborhood broadcast} mechanism is adopted to record the behaviors}};
		
		\node[right] at (1.1,0.8){\small{of all clients. }};
		
		\node[right] at (0.5,0.4){\small{\noindent \textcircled{\oldstylenums{2}}  A newly joined client $u_{new}$ will perform step 2-8.}};
		
	\end{tikzpicture}
    \vspace{-1.5mm}
	\caption {High-level view of the CFL protocol.}
	\label{fig-4}
\end{figure}

\section{Analysis}
In this section, we analyze the security properties of the proposed CFL protocol under the internal semi-honest participants threat model and the external malicious adversaries threat model. Particularly, we analyze the network connectivity, which is critical for secure communication among clients. 

\subsection{Security Analysis}
In the internal semi-honest participants threat model, we mainly focus on the privacy requirements facing the honest-but-curious server, target clients, and potential clients.

\textit{The proposed CFL protocol is privacy-preserving facing an honest-but-curious server.} An honest-but-curious server may infer the private local training data from clients' local contributions by executing model inversion attacks \cite{Fredrikson2015}. To defend against such inferences, the proposed CFL protocol aggregates local contributions in a privacy-preserving manner, where the server can only obtain the aggregation result rather than the private local model update parameters.   

\textit{The proposed CFL protocol is privacy-preserving facing honest-but-curious target clients.} An honest-but-curious target client may also execute model inversion attacks \cite{Fredrikson2015} to threaten the privacy contained in the local datasets. To resist such attacks, a random noise is introduced to disturb the aggregated model update parameters. Therefore, all target clients can only get disturbed local contributions or the intermediate aggregation results, rather than the precise local model update parameters. 

\textit{The proposed CFL protocol is privacy-preserving facing honest-but-curious potential clients.} An honest-but-curious potential client may execute eavesdropping attacks \cite{attack2011} to obtain the intermediate aggregation result. Thus, the proposed CFL protocol adopts the authenticated encryption to protect aggregated model update parameters. Therefore, an honest-but-curious potential client cannot eavesdrop the plaintext of other's local model update parameters.

Furthermore, in the external malicious threat models, we discuss the data integrity and authenticity of the aggregated model update parameters and the confidentiality of the communication keys.

\textit{The proposed CFL protocol guarantees the data integrity and authenticity of the aggregated model update parameters.} A malicious adversary can execute tampering attacks \cite{attack2011} and impersonation attacks \cite{attack2011} to threaten the data integrity and authenticity of the aggregated model update parameters. To defend against such attacks, the proposed CFL protocol adopts authenticated encryption to protect aggregated model update parameters, where the authentication tag can effectively resist unauthorized tampering of messages and verify whether the sender is a trusted source. 

\textit{The proposed CFL protocol guarantees the confidentiality of the communication keys.} More seriously, a malicious adversary can hijack honest participants, which may lead to the leakage of the communication keys, resulting in various attacks to threaten the security of the system. Thus, the {\bf Secure Communication Key Establishment Protocol} is enhanced by a voting-based \emph{key revocation} mechanism to remove the contaminated keys, thereby eliminating the possibility of malicious adversaries stealing communication keys. Therefore, the confidentiality of the communication key is ensured. 

\subsection{Network Connectivity Analysis}
In the proposed CFL protocol, we use the \textbf{Secure Communication Key Establishment Protocol} to establish communication keys for potential clients. We analyze the network connectivity based on the established communication keys in \textbf{Theorem \ref{thm-1}}, which is the basis of secure transmission.

\begin{theorem}\label{thm-1}
	For a cluster $ L_{h} $, the size of the key ring for each potential client only needs to be $ m_{h} $ to ensure a high network connectivity probability.
\end{theorem}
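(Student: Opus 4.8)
\section*{Proof Proposal}

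The plan is to model the key-sharing structure inside a cluster as a random graph and to invoke the classical Erd\H{o}s--R\'enyi connectivity threshold. Concretely, for the cluster $L_h$ with $z_h$ potential clients, let $G_h$ be the graph whose vertices are the potential clients of $\textbf{\textit{U}}_h$ and whose edges join two clients that are simultaneously within each other's constant wireless communication range and hold at least one common pairwise key (equivalently, can establish a communication key via the challenge--response step of the \textbf{Secure Communication Key Establishment Protocol}). Secure transmission within $L_h$ is possible exactly when $G_h$ is connected, so it suffices to lower-bound $\Pr[G_h \text{ connected}]$ and read off the smallest $m_h$ that makes this probability high.

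First I would compute the edge probability $p$ of $G_h$. Since each potential client matches its identity with $m_h$ others chosen uniformly from the remaining $z_h-1$ identities, and a pair is linked whenever either endpoint picked the other, the probability that two given clients share a key is $p_{\mathrm{key}} = 1-\bigl(1-\tfrac{m_h}{z_h-1}\bigr)^{2}$, which is $\approx \tfrac{2m_h}{z_h-1}$ in the parameter range of interest. Writing $p_{\mathrm{wl}}$ for the probability that two clients lie within wireless range (a property of the deployment, independent of the key assignment), the effective edge probability of $G_h$ is $p = p_{\mathrm{key}}\,p_{\mathrm{wl}}$.

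Next I would apply the Erd\H{o}s--R\'enyi connectivity theorem: for $G(n,p)$ one has $\Pr[\text{connected}]\to e^{-e^{-c}}$ whenever $np-\ln n\to c$. Fixing a target connectivity probability $P_c$ close to $1$ and substituting $n=z_h$ gives the requirement $p \ \ge\ \frac{\ln z_h-\ln\ln(1/P_c)}{z_h}$. Substituting $p = p_{\mathrm{key}}\,p_{\mathrm{wl}}$ and solving for the key ring size yields a closed-form threshold of the shape $m_h \ \ge\ \frac{(z_h-1)\bigl(\ln z_h-\ln\ln(1/P_c)\bigr)}{2\,z_h\,p_{\mathrm{wl}}}$, i.e. $m_h = \Theta(\ln z_h)$. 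This is the quantitative content of the statement: the key ring only has to grow logarithmically in the cluster size, and once $m_h$ attains this value the cluster graph $G_h$ is connected with probability at least $P_c$, which together with the leader-to-server links propagates to the whole multi-cluster network and guarantees secure transmission.

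The main obstacle, and the place I would spend the most care, is the passage from the idealized $G(z_h,p)$ model to the actual random pairwise graph: its edges are not mutually independent (every vertex has out-degree exactly $m_h$, and the wireless-range relation is spatially correlated), so one must argue that these dependencies do not move the connectivity threshold --- e.g. via an isolated-vertex / union-bound argument showing that the dominant obstruction to connectivity is still the presence of a degree-zero vertex, whose probability is governed by the same $p$. I would also flag that the Erd\H{o}s--R\'enyi statement is asymptotic, so for a finite cluster the bound is used as an approximation (or replaced by an explicit union bound over small components), and that $z_h$ must be large enough for the logarithmic threshold to be meaningful.
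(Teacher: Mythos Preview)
Your approach is essentially the paper's: model the key-sharing relation in $L_h$ as an Erd\H{o}s--R\'enyi random graph, invoke the connectivity threshold $P_c=e^{-e^{-c}}$ with $p=\frac{\ln n+c}{n}$, and back out the required key ring size. The paper, however, takes a more stripped-down route than you do. It does \emph{not} introduce a separate wireless-range probability $p_{\mathrm{wl}}$; it simply treats the key-ring pairing itself as the edge relation. It also does not derive the edge probability via your bidirectional ``either endpoint picked the other'' argument; instead it identifies $m_h$ directly with the expected degree of $G(n_h,r_h)$ (writing $r_h=e/\binom{n_h}{2}$ and $m_h=2e/n_h$, hence $m_h=r_h(n_h-1)\sim r_h n_h$) and reads off the closed form $m_h=\ln(n_h)-\ln(-\ln(P_c))$ with no factor of $2$ and no $p_{\mathrm{wl}}$ in the denominator. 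So your refinements---the wireless-range term, the factor-of-two from symmetric matching, and the discussion of edge dependence---are all additions beyond what the paper argues; the paper silently treats the key graph as if it were exactly $G(n_h,r_h)$ and does not address the independence or finite-$n$ caveats you raise. Your version is more careful and yields the same $\Theta(\ln n)$ conclusion, but if you want to match the paper's formula you should drop $p_{\mathrm{wl}}$ and set the expected degree equal to $m_h$ rather than $2m_h$.
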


\begin{proof}
	We first formalize the network topology in cluster $ L_{h} $ as a random graph $ G (n_{h}, r_{h}) $, where $n_{h}$ is the amount of potential clients in $ L_{h} $, and $r_{h}\in[0,1]$ is the probability that any two potential clients are connected. 
	
	To reflect the desired connectivity probability $P_{c}$ of $ G (n_{h}, r_{h}) $, we recall \textbf{Lemma \ref{lem-1}}.
	
	\begin{lemma}[\cite{Boyd2006}] \label{lem-1}
		Given a desired connectivity probability $P_{c}$ for a graph $ G(n_{h},r_{h}) $, the threshold function $r_{h}$ is defined by:
		\begin{equation}\label{eq-15}
			P_{c}=\lim _{n_{h} \rightarrow \infty} \operatorname{Pr}[G(n_{h}, r_{h}) \text { is connected }]=e^{-e^{-c}},
		\end{equation}
		where $ r_{h}=\frac{\ln(n_{h})}{n_{h}}+\frac{c}{n_{h}} $, and $c$ is any real constant.
	\end{lemma}
	
	According to \textbf{Lemma \ref{lem-1}}, given a high connectivity probability (e.g., 0.999), we can calculate $r_{h}$ as follows:
	\begin{equation}\label{eq-16}
		r_{h}= \frac{\ln (n_{h})}{n_{h}}+\frac{-\ln(-\ln(P_{c}))}{n_{h}}.
	\end{equation}
	
	Thus, we can infer that a graph with $n_{h}$ vertices has at most $\binom{n_{h}}{2}$ edges \cite{Bondy1976}. Assume that $G (n_{h}, r_{h})$ has $e$ edges, we can calculate $ r_{h} $ as follows:
	\begin{equation}\label{eq-17}
		r_{h}=\frac{e}{\binom{n_{h}}{2}}.
	\end{equation}
	
	Next, we can calculate the key ring size $m_{h}$ of the potential client, i.e., the expected degree of $G (n_{h}, r_{h})$, shown as:
	\begin{equation}\label{eq-18}
		m_{h}=\frac{2e}{n_{h}}.
	\end{equation}
	
	Therefore, according to the equations (\ref{eq-17}) and (\ref{eq-18}), we can figure out the relationship between $ m_{h} $ and $ r_{h} $ as:
	\begin{equation}\label{eq-19}
		m_{h}=r_{h}*(n_{h}-1)\sim r_{h}*n_{h},
	\end{equation} 
	where $\sim$ represents an equivalence symbol.
	
	As a result, according to the equation (\ref{eq-16}), we can calculate the key ring size $ m_{h} $ as: 
	\begin{equation}\label{eq-20}
		m_{h}=\ln (n_{h})-\ln(-\ln(P_{c})),
	\end{equation}
which complete the proof.
	$\hfill\blacksquare$
\end{proof}

\textbf{Theorem \ref{thm-1}} indicates that the size of the key ring of each potential client only needs to be $ m_{h}=\ln (n_{h})-\ln(-\ln(P_{c}))$ for a high network connectivity probability. In other words, given the size of the key ring $ m_{h} $, all aggregation results can be transmitted in ciphertext with a high probability.    

\section{Experiments and Evaluation}
In this section, we conduct the experiments in a spam classification scenario. All experiments are implemented on the same computing environment (Linux Ubuntu 16.04, Intel i7-6950X CPU, 64 GB RAM and 5TB SSD) with Tensorflow, Keras and PyCryptodome.

\subsection{Experimental Design}

\noindent{\bfseries Dataset.} 
The datasets used in our experiments consist of Trec06p \footnote{We give the download link as: https://trec.nist.gov/data/spam.html.\label{web}} and Trec07 \textsuperscript{\ref {web}}, which are two English e-mail datasets from the real world. The Trec06p dataset contains 12910 hams and 24912 spams in the main corpus with messages, and the Trec07 dataset contains 25220 hams and 50199 spams. 

\noindent{\bfseries Network connectivity topology.} 
As shown in Fig. \ref{fig-5}, in our experiments, there are 200 potential clients distributed in a large-scale P2P network which is divided into five clusters. Among the potential clients, 100 target clients are illustrated as the gray nodes, and the black nodes are leader clients. Further, we illustrate the connections among potential clients based on the communication keys as the gray lines and the aggregation route as the red lines.

\begin{figure}[htb] 
	\centering 
	\includegraphics[scale=0.1]{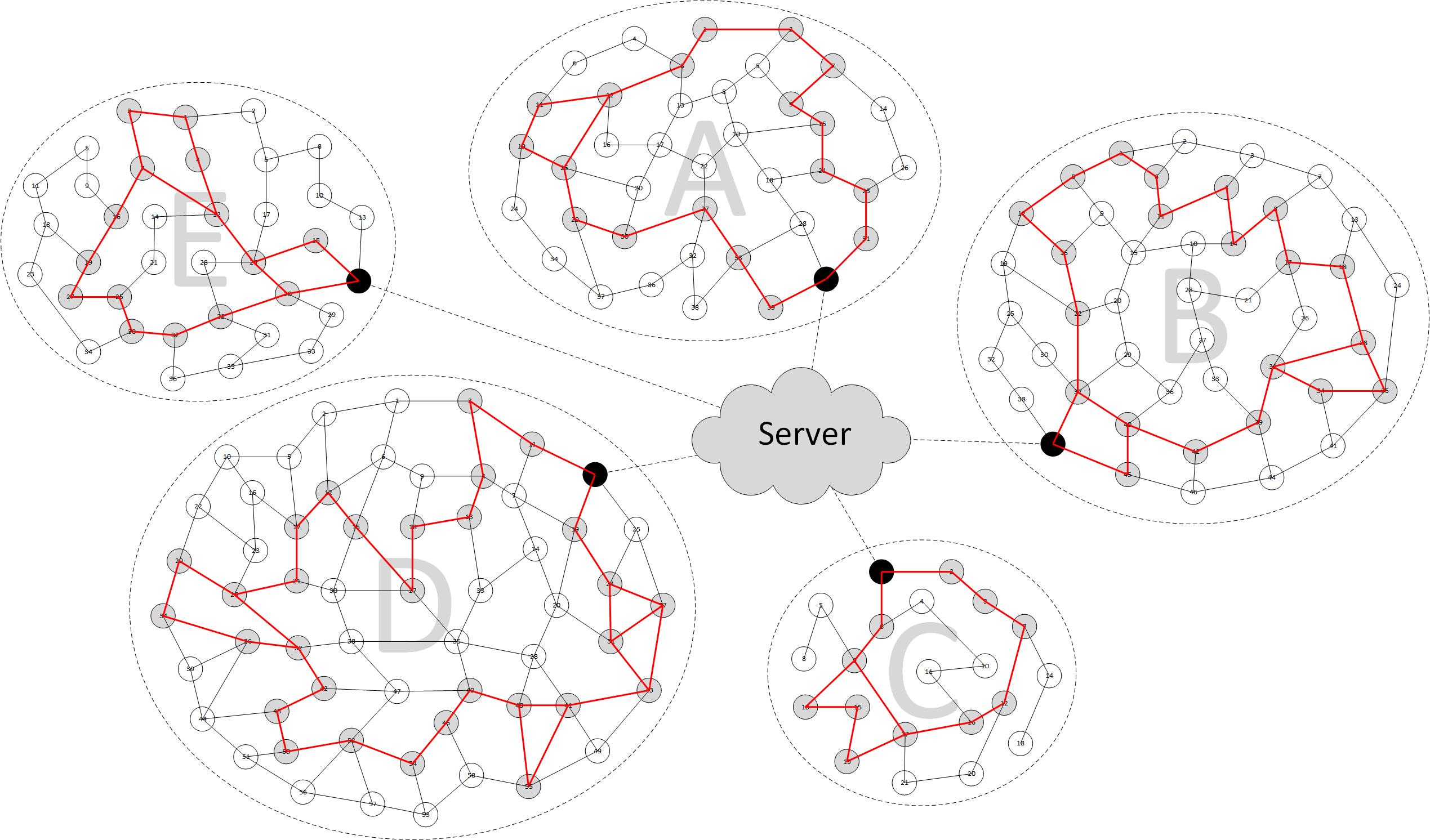} 
	\caption{Network connectivity topology.} 
	\label{fig-5}
\end{figure}

\noindent{\bfseries Parameter tuning.} 
In our experiments, all target clients use the local datasets to collaboratively train a global model according to the proposed CFL protocol. It is worth mentioning that the Trec06p dataset is divided into the Trec06p training set and the Trec06p testing set according to the ratio of 3:1, and the Trec06p training set is adopted to train the original global model. In addition, the Trec07 dataset is divided into two parts. One part is further divided into 100 local datasets for target clients, and the set size follows a normal distribution of mean 600 and variance 100. The other part serves as the Trec07 testing set to evaluate the global model performance. The original global model and local models have the same structure, which consists of two convolution layers, two pooling layers, and three fully connected layers. We also set the loss function as the \textit{cross-entropy error} and the active function as the \textit{ReLU}. The gradient descent algorithm is set as SGD with the learning rate of 0.1. Besides, we use the \textit{AES-GCM-128bit} algorithm \cite{galois2004} for authenticated encryption.  

\subsection{Experimental Results and Evaluation}
In order to validate the model performance of the proposed CFL protocol, we evaluate the accuracy and loss value of the global model in Fig. \ref{fig-6}. After 14 rounds of global model training, the global model converges, and the accuracy and loss value of the final global model are 99.32 \%  and 0.0356 on the Trec07 testing set. The results show that the proposed CFL protocol can guarantee the convergence of the global model with high classification accuracy.  

\begin{figure}[htbp]
	\centering
	\subfigure[Accuracy of global model.]{
		\begin{minipage}[t]{0.45\linewidth}
			\centering
			\includegraphics[scale=0.24,trim=10 0 0 10,clip]{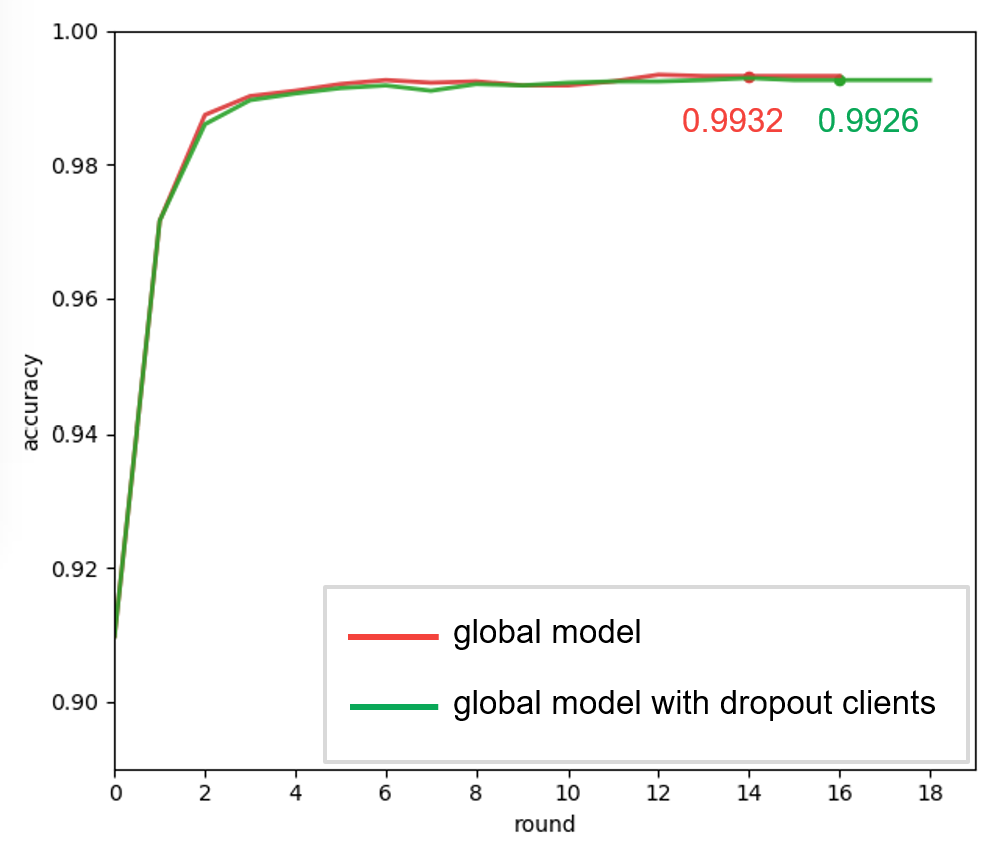}
			%\caption{fig2}
		\end{minipage}%
	}%
	\subfigure[Loss value of global model.]{
		\begin{minipage}[t]{0.45\linewidth}
			\centering
			\includegraphics[scale=0.24,trim=10 0 0 10,clip]{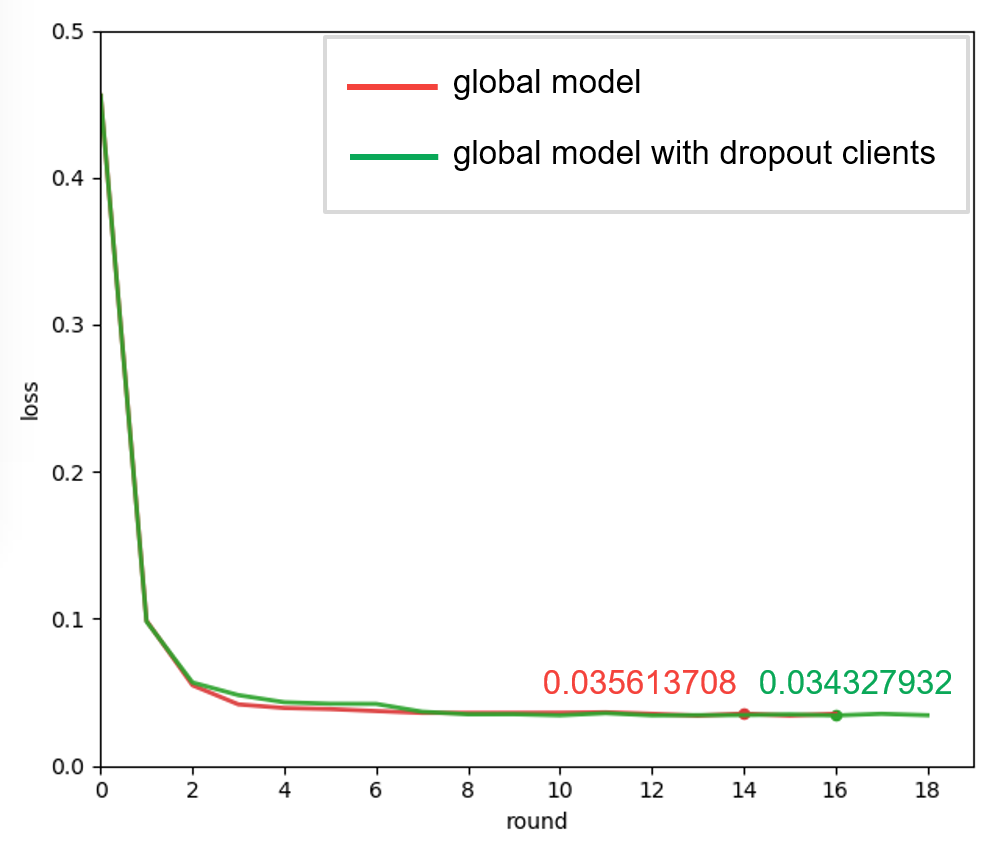}
			%\caption{fig1}
		\end{minipage}%
	}%
	\centering
	\vspace{-1.5mm}
	\caption{Global model performance on the Trec07 testing set.}
	\label{fig-6}
\end{figure}

Moreover, we compare the performances of the original global model and the final global model by illustrating the receiver operating characteristic (ROC) curve and the area under the curve (AUC) on both Trec06p and Trec07 testing sets in Fig. \ref{fig-7}. The AUC of the final global model is larger than that of the original global model on the Trec07 testing sets and is close to the AUC on the Trec06p testing set, which shows that the final global model outperforms the original global model in terms of model generalization.

\begin{figure}[htbp]
	\centering
	\subfigure[Trec06p.]{
		\includegraphics[scale=0.115]{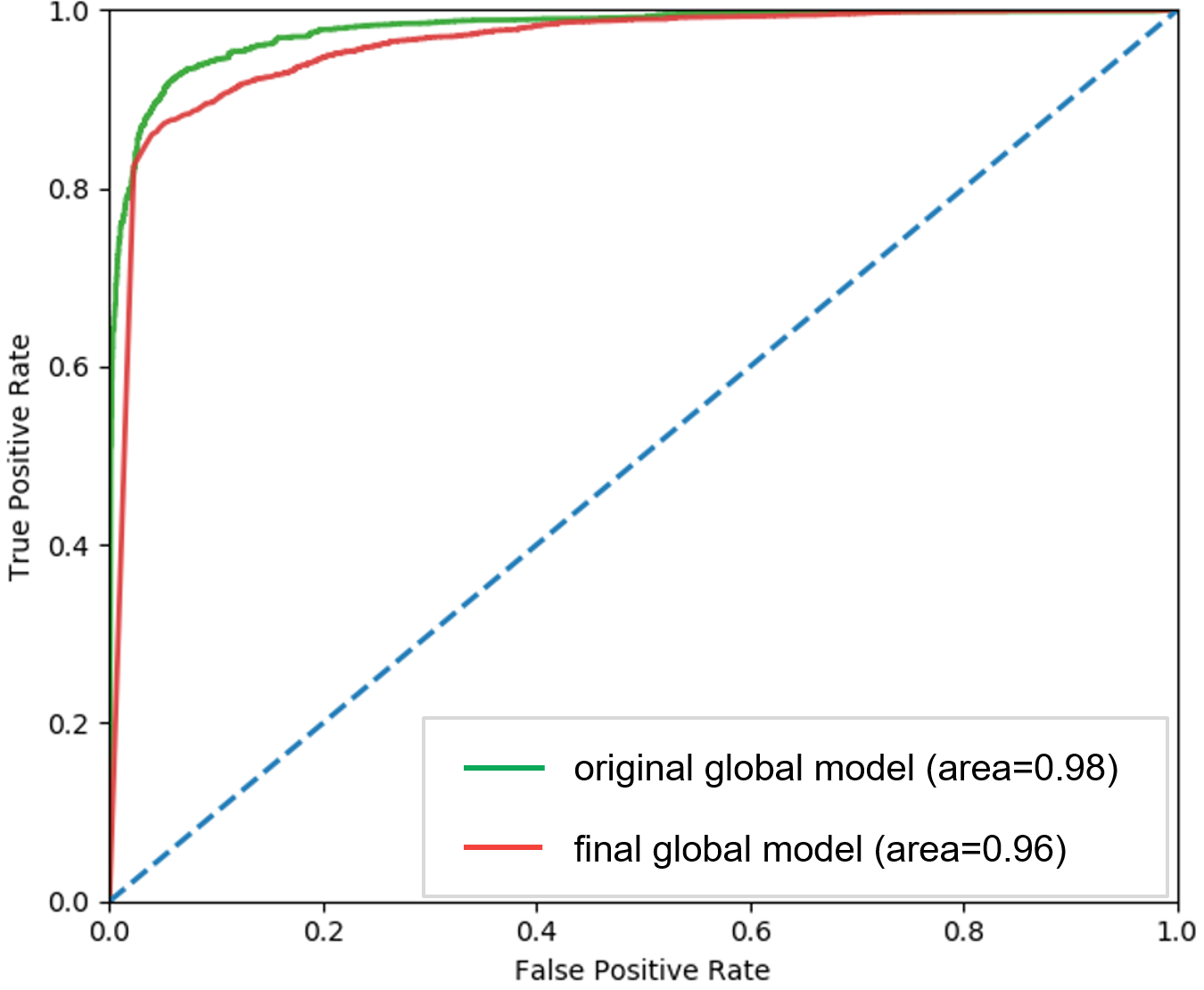}}
	\subfigure[Trec06p-drop.]{
		\includegraphics[scale=0.115]{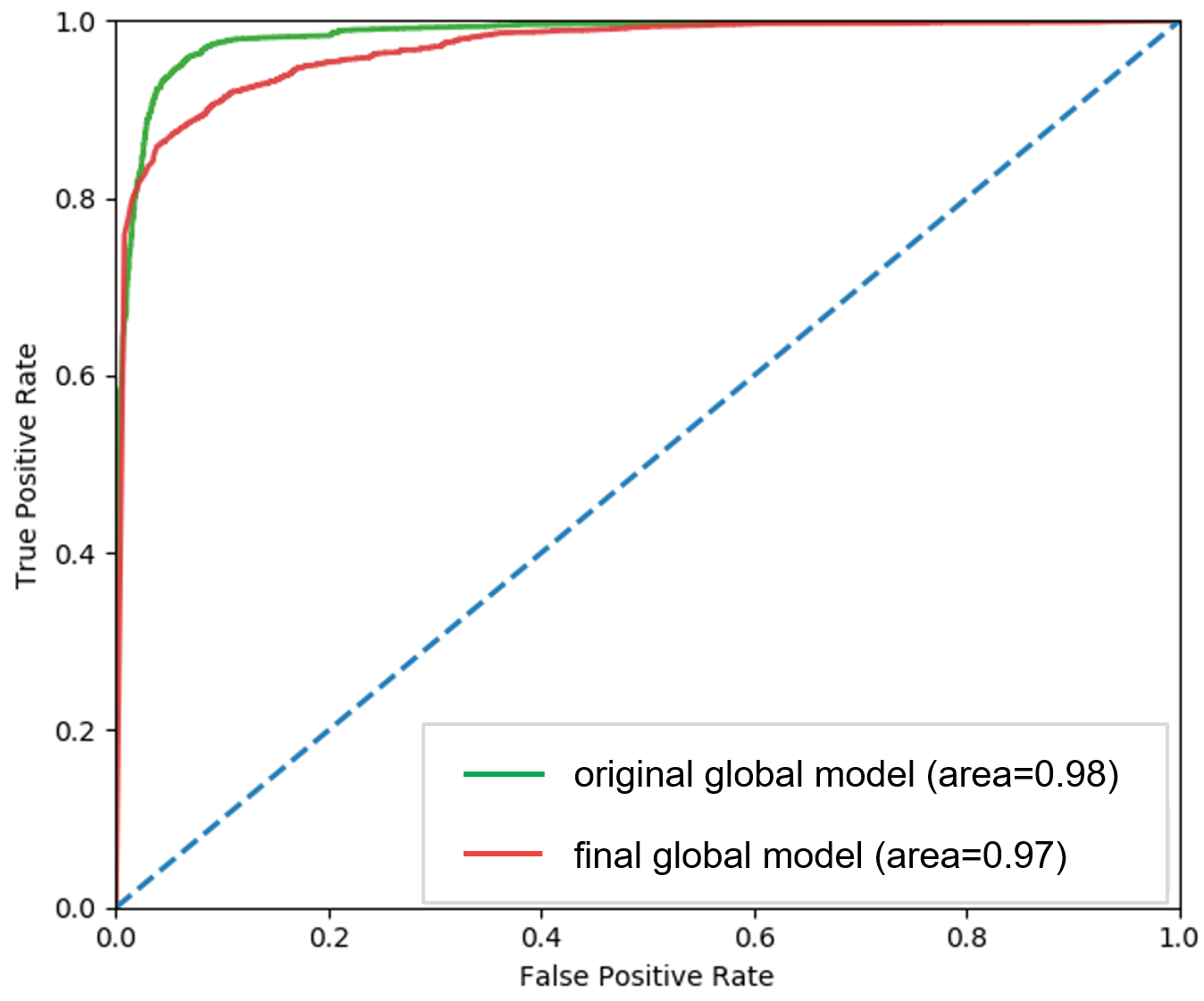}}
	\subfigure[Trec07.]{
		\includegraphics[scale=0.115]{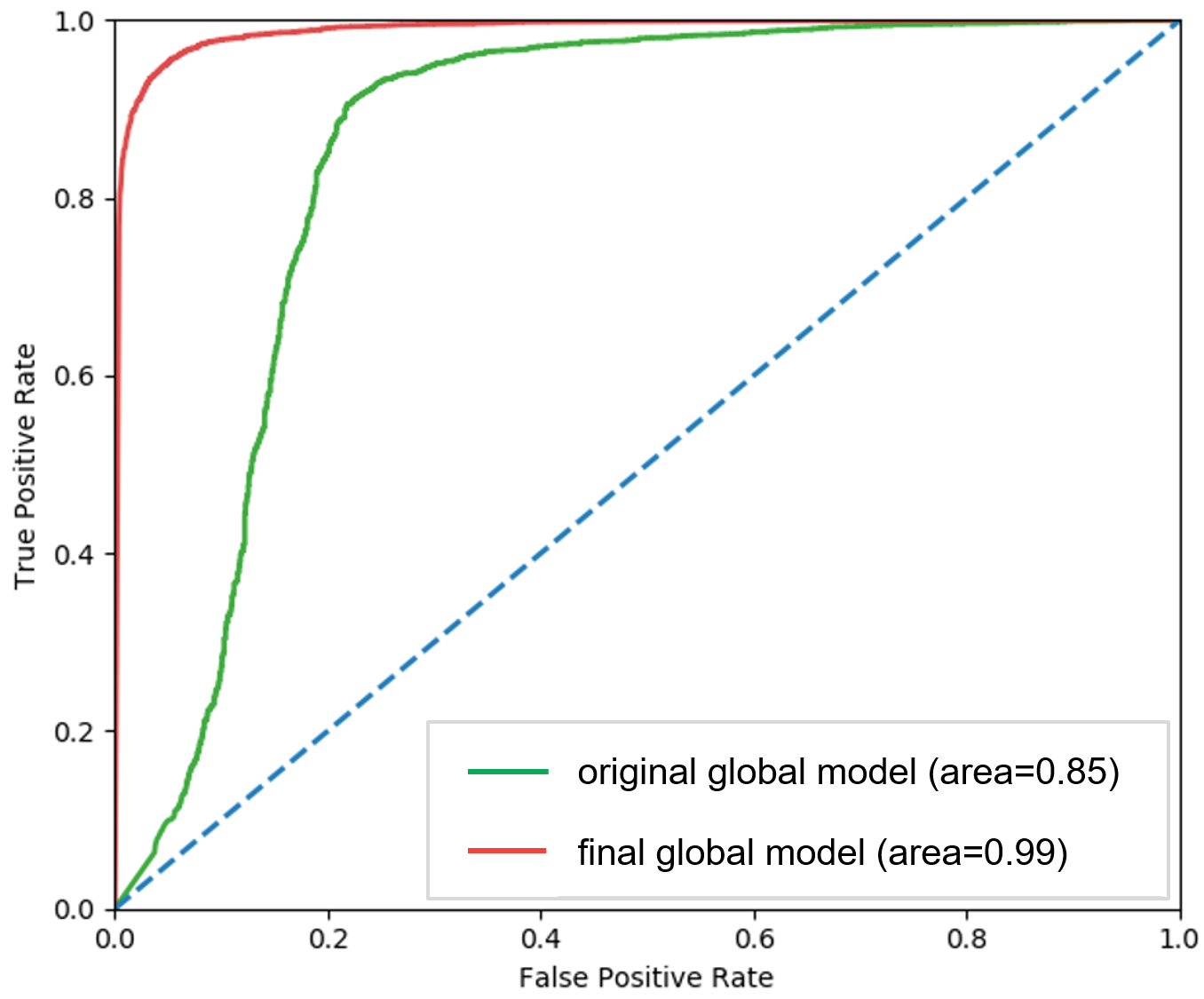}}
	\subfigure[Trec07-drop.]{
		\includegraphics[scale=0.115]{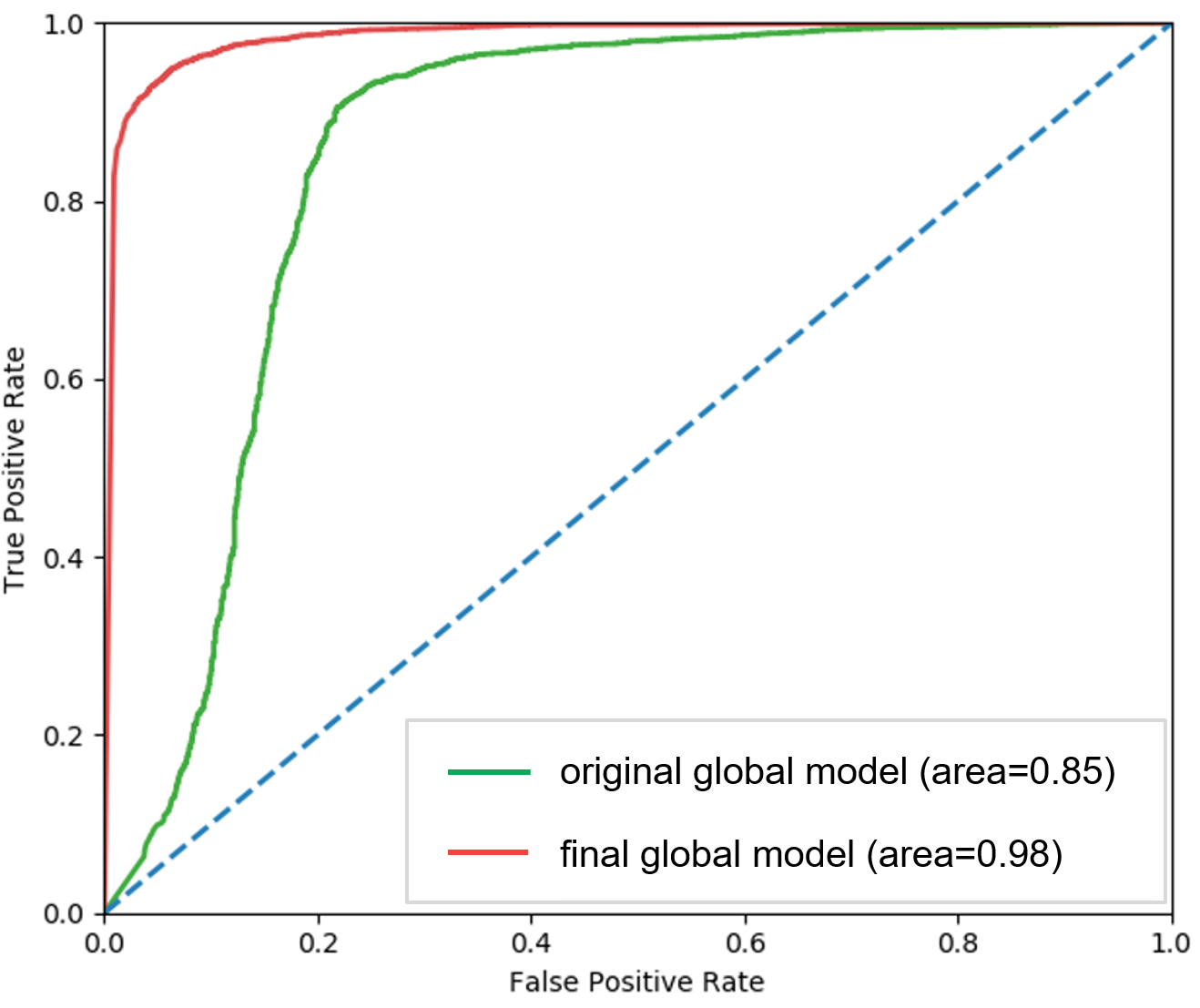}}
	\centering
	\vspace{-1.5mm}
	\caption{ROC of the original and final global model.}
	\label{fig-7}
\end{figure}

Besides, in order to validate the dropout-robustness of the proposed CFL protocol, we randomly set 15\% dropout clients in each training round. As shown in fig. \ref{fig-6}, in this case, the global model converges at 16 rounds, and the accuracy and loss value of the global model are 99.26\% and 0.0343, which are close to the values without dropout clients. Such a result shows that the proposed CFL protocol ensures the classification accuracy and the convergence rate of the global model facing dropout issues. Similar results can also be observed in Fig. \ref{fig-7}, where the final global model obtains AUC values of 0.97 and 0.98 on the Trec06p and Trec07 testing sets, which also reflect the dropout-robustness of the proposed CFL protocol.

More importantly, to demonstrate the communication efficiency of the proposed CFL protocol, we record the averaging communication times achieving a round of aggregation in TABLE I. Compared to the PPT protocol \cite{PPT2021}, CFL improves the communication efficiency by about 43.25\%. Such an improvement benefits from the cluster-based division. It is worth mentioning that the communication times are not necessarily negatively correlated with the number of dropout clients but are related to the distribution of network topology. 

\begin{table}[htbp] %开始一个表格environment，表格的位置是h,here。
	\centering  
	\vspace{-10mm}
	\caption{Communication Times} %显示表格的标题
	\begin{threeparttable}  
		\begin{tabular}{|p{5.5cm}<{\centering}|p{0.9cm}<{\centering}|p{0.9cm}<{\centering}|p{0.9cm}<{\centering}|p{0.9cm}<{\centering}|p{0.9cm}<{\centering}|p{0.9cm}<{\centering}|} %设置了每一列的宽度，强制转换。  
			
			\hline  
			Percentage of dropout clients (\%) & 0 & 1 & 2 & 5 & 10 & 15 \\
			\hline  
			CFL's communication times & 105 & 106 & 104 & 105 & 102 & 92 \\ %用&来分隔单元格的内容 \\表示进入下一行  
			\hline %画一个横线，下面的就都是一样了，这里一共有4行内容
			PPT's communication times & 191 & 186 & 185 & 184 & 172 & 164  \\			
			\hline  					
		\end{tabular}
		
%		\begin{tablenotes}
%			\footnotesize
%			\item We randomly set 0-15\% dropout clients in each round of aggregation.     %此处加入注释*信息
			%\item[2] The item of decryption is a 301MB file %此处加入注释**信息
%		\end{tablenotes}
	\end{threeparttable}  
\end{table}   

In addition, we record the time of each operation in TABLE II, where the authenticated encryption (decryption) in our protocol consumes less time than the encryption (decryption) and signature (verification) in the PPT protocol \cite{PPT2021}, which achieves a 0.87\% improvement in terms of computational efficiency. TABLE II also shows the time for a single leader client in our protocol to perform noise generation, addition, and subtraction operations, which are almost the same as the PPT protocol.

\begin{table}[h] %开始一个表格environment，表格的位置是h,here。
	\centering 
	\vspace{-10mm} 
	\caption{Computational Performance} %显示表格的标题
	\begin{threeparttable}  
		\begin{tabular}{m{1.5cm}<{\centering}|m{6.5cm}<{\centering}|m{3.5cm}<{\centering}} %设置了每一列的宽度，强制转换。  		
			\hline  
			\hline  
			\makecell[c]{Index} & \makecell[c]{Operations$^{\dagger}$} & \makecell[c]{Time(ms/1000byte)}\\   
			\hline 
			\makecell[c]{1} & \makecell[c]{Encryption(AES-128bit)} & \makecell[c]{170.8248$ ^{\ast}$}\\  
			\hline  
			\makecell[c]{2} & \makecell[c]{Decryption(AES-128bit)} & \makecell[c]{0.0282$ ^{\ast} $}\\  
			\hline  
			\makecell[c]{3} & \makecell[c]{Signature(Elgamal-2048bit)} & \makecell[c]{0.0003$^{\ast}$}\\  
			\hline  
			\makecell[c]{4} & \makecell[c]{Verification(Elgamal-2048bit)} & \makecell[c]{0.0071$ ^{\ast} $}\\  
			\hline  
			\makecell[c]{5} & \makecell[c]{Encryption  (AES-GCM-128bit)} & \makecell[c]{169.3596}\\  
			\hline  
			\makecell[c]{6} & \makecell[c]{Decryption(AES-GCM-128bit)} & \makecell[c]{0.0163}\\
			\hline  
			\makecell[c]{7} & \makecell[c]{Noise generation } & \makecell[c]{1.2368}\\
			\hline  
			\makecell[c]{8} & \makecell[c]{Noise addition} & \makecell[c]{0.0877}\\
			\hline
			\makecell[c]{9} & \makecell[c]{Noise subtraction} & \makecell[c]{0.1316}\\
			\hline
			\hline 			
		\end{tabular}
		\begin{tablenotes}
			\footnotesize
			\item[$ \dagger $] Local\! model\! update\! parameters\! size: \!114MB\! in\! plaintext\! and\! 428.8MB\! in\! ciphertext.   
			\item[$ \ast $] Based on the same experimental settings, we directly use the results of relevant operation time in the PPT protocol \cite{PPT2021} for comparison.   
		\end{tablenotes}
	\end{threeparttable}  
\end{table}

\vspace{-10mm}
\section{Conclusion}

Aiming to solve the problems of a single point of failure and high communication costs in FL, we propose a CFL protocol for {\it FL in large-scale P2P networks}, focusing on efficient and privacy-preserving model training. Some interesting researches could further improve CFL in the future. On the one hand, an effective design for FL in dynamic P2P networks could flourish the development of decentralized FL in practice. On the other hand, along with the hardware development of clients, a computation-friendly homomorphic encryption algorithm could improve the communication efficiency of the system more.
%\begin{thebibliography}{99}
%
% ---- Bibliography ----
%
% BibTeX users should specify bibliography style 'splncs04'.
% References will then be sorted and formatted in the correct style.
%
\bibliographystyle{splncs04}
\bibliography{ref}
\end{document}